\newtheorem{definition}{Definition}
\newtheorem{lemma}{Lemma}
\newtheorem{theorem}{Theorem}
\DeclarePairedDelimiter\ceil{\lceil}{\rceil}
\DeclarePairedDelimiter\floor{\lfloor}{\rfloor}
\DeclarePairedDelimiter\abs{\lvert}{\rvert}   
\begin{document}

\copyrightyear{2025}
\copyrightclause{Copyright for this paper by its authors.
  Use permitted under Creative Commons License Attribution 4.0
  International (CC BY 4.0).}

\conference{}

\title{Space-Efficient Private Estimation of Quantiles}


\author[1]{Massimo Cafaro}[%
orcid=0000-0003-1118-7109,
email=massimo.cafaro@unisalento.it,
]
\cormark[1]
\fnmark[1]

\author[1]{Angelo Coluccia}[%
orcid=0000-0001-7118-9734,
email=angelo.coluccia@unisalento.it,
]
\fnmark[1]

\author[1]{Italo Epicoco}[%
orcid=0000-0002-6408-1335,
email=italo.epicoco@unisalento.it,
]
\fnmark[1]

\author[1]{Marco Pulimeno}[%
orcid=0000-0002-4201-1504,
email=marco.pulimeno@unisalento.it,
]
\fnmark[1]

\address[1]{Department of Engineering for Innovation, University of Salento, Lecce, 73100 Italy}

\cortext[1]{Corresponding author.}
\fntext[1]{These authors contributed equally.}

\begin{abstract}
Fast and accurate estimation of quantiles on data streams coming from communication networks, Internet
of Things (IoT), and alike, is at the heart of important data processing applications including statistical analysis, latency monitoring, query optimization for parallel database management systems, and more. Indeed, quantiles are more robust indicators for the underlying distribution, compared to moment-based indicators such as mean and variance. The streaming setting additionally constrains accurate tracking of quantiles, as stream items may arrive at a very high rate and must be processed as quickly as possible and discarded, being their storage usually unfeasible. Since an exact solution is only possible when data are fully stored, the goal in practical contexts is to provide an approximate solution with a provably guaranteed bound on the approximation error committed, while using a minimal amount of space. At the same time, with the increasing amount of personal and sensitive information exchanged, it is essential to design privacy protection techniques to ensure confidentiality and data integrity. In this paper we present the following differentially private streaming algorithms for frugal estimation of a quantile: \textsc{DP-Frugal-1U-L}, \textsc{DP-Frugal-1U-G}, \textsc{DP-Frugal-1U-$\rho$}. Frugality refers to the ability of the algorithms to provide a good approximation to the sought quantile using a modest amount of space, either one or two units of memory. We provide a theoretical analysis and  experimental results.
\end{abstract}

\begin{keywords}
Quantiles \sep
Streaming \sep
Differential Privacy \sep
Frugal Algorithms
\end{keywords}

\maketitle

\section{Introduction}
\label{introduction}

Data processing of network-originated streams, coming from wireless sensors, telecommunications networks, Internet of Things (IoT), cyber-physical systems, and many other sources, is ubiquitously present in contemporary ICT applications.
With the increasing amount of personal and sensitive information exchanged, it is essential to design privacy protection techniques to ensure confidentiality and data integrity. At the same time, there is a trend in trying to reduce the computational complexity and memory requirements of data processing algorithms, with the goal of lowering costs and environmental impact. Such issues are expected to play an even more important role in  upcoming 6G wireless networks, with a large amount of data  generated and processed at the edge of the network in resource-constrained devices, and potentially exposed to security and privacy attacks \cite{6G}. 
For such reasons, recent work is actively working to obtain both i) privacy-preserving and/or ii) reduced-complexity versions of classical algorithms. 

As to the first requirement, differential privacy (DP) has attracted significant interest among various research communities, including computer science, communications, data and signal processing \cite{IoT21,DPtutorial_ACM,DP_SPM}.
Among the algorithms that have been recently revisited under the DP paradigm we can cite principal component analysis (PCA) \cite{PCA}, federated learning and microaggragation for IoT \cite{IoT22,IoT23}, and various algorithms for the estimation of mean values or other ensemble statistics under different settings, e.g., \cite{NEURIPS2021,mean,Scaglione}. The interested reader may refer to one of the many surveys available; we recall here \cite{survey}, which is a comprehensive and recent survey discussing differentially private algorithms for both data publishing and data analysis.  

As to the second requirement, while the processing of clipped or quantized data is a classical topic \cite{clipped_noise,quantization}, recent research is focusing on extreme settings where only a very limited amount of information is retained. For instance, since as known the complexity of analog-to-digital conversion (ADC) grows exponentially with the number of bits, and the power dissipated by the ADC circuit scales approximately at the same pace \cite{ADC_JSAC}, it is very convenient to adopt coarse quantization or even a single comparator that forwards only the sign of the  signal, discarding all the remaining information. While this brings enormous advantages, the significant information loss requires more advanced algorithms for the processing \cite{Eldar1, Eldar2}.
The same principle has been applied to the processing downstream the ADC, specifically through the introduction of binning as well as memory-constrained algorithms. In particular, a recent trend is the adoption of computational approaches that require only a few or a single memory variables. This strategy,  termed ``frugal'', allows the processing of large amount of data in resource-constrained devices, and is therefore a very timely research direction.

The streaming setting adds additional constrains, since stream items may arrive at a very high rate and must be processed as quickly as possible and discarded \cite{IoT24}, owing to the fact that storing them is usually unfeasible given that the input stream may be potentially infinite. An exact solution is only possible if all of the stream items are stored, so that, since streaming algorithms strive to use a minimal amount of space, the goal is to provide an approximate solution with a provably guaranteed bound on the approximation error committed.

Inspired by the  above, this work aims at obtaining DP quantile estimation algorithms with very low memory requirement, so fulfilling both  i) and ii) at the same time.
As known, quantiles (e.g., median, quartiles, percentiles, etc.) are a key tool in robust statistics \cite{huber2009}, whose aim is to obtain ensemble information from a set of data, namely an estimate of a parameter of interest, while limiting the impact of outliers or heavy tails. 
Indeed, estimating an unknown parameter from a set of random variables, which may also arise from observations at different sources \cite{tsai2011cooperative,nevat2013random} is typically performed by computing some average  on a data stream. 
However, data heterogeneity (including non-independent and identically distributed distributions \cite{IoTNIID}) may arise due to local causes~\cite{fijalkow2016parameter,chen2019resilient}, such as the presence of outliers~\cite{papageorgiou2015robust},  
 heavy-tails, random effects with heterogeneous variance and/or variable sample size, for which robust tools not requiring knowledge of the data distribution are needed \cite[and references therein]{Grassi2023}. Heavy tails are found in many types of data, including financial \cite{muller1998heavy}, natural and computer-originated data \cite{machado2015review}, a prominent example of the latter being the   Internet \cite{Internet3}: in such a context, the use of quantiles/percentiles is indeed a popular practical tool to cope with the wild variability of IP traffic \cite{Internet3,willinger2004more}, e.g.  
 for robust estimation of network-wide key   performance  indicators (KPIs) such as one-way delays \cite{OWD} and round-trip times \cite{RTT}. 
 Besides these contexts, fast and accurate tracking of quantiles in a streaming setting is of utmost importance also in database query optimizers, data splitting for parallel computation in database management systems, etc. 

While recent work has provided DP algorithms for mean values  \cite{NEURIPS2021,mean}, to the best of our knowledge no DP algorithm is available in the literature for quantile estimation via frugal computation. We base our work on the \textsc{Frugal-1U} algorithm \cite{frugal}, discussed  in Section \ref{frugal-1u-alg}, and present preliminary results. Overall, we provide the following original contributions, without assuming knowledge of the data distribution:

\begin{itemize}
    \item we analyze the \textsc{Frugal-1U} algorithm and prove that its global sensitivity is bounded and equal to 2; next, we design three DP versions of the algorithm based respectively on the Laplace mechanism, the Gaussian mechanism and on $\rho$ zero-concentrated DP;
   
    \item we validate the theoretical results through  simulations.
\end{itemize}

The rest of this paper is organized as follows. Section \ref{notation} provides necessary definitions and notation used throughout the manuscript. Section \ref{frugal-1u-alg} introduces the \textsc{Frugal-1U} algorithm whilst Section \ref{dp-frugal-1u} presents our analysis and three corresponding DP algorithms. We present the experimental results in section \ref{results} and draw our conclusions in Section \ref{conclusions}.

\section{Preliminary Definitions and Notation}
\label{notation}
In this Section, we briefly recall the definitions and notations that shall be used throughout this paper. We begin by giving a formal definition of quantiles.

\begin{definition} (Lower and upper $q$-quantile) Given a set $A$ of size $n$ over $\mathbb{R}$, let $R(x)$ be the rank of the element $x$, i.e., the number of elements in $A$ smaller than or equal to $x$. Then, the lower (respectively upper) $q$-quantile  item $x_q \in A$ is the item $x$ whose rank $R(x)$ in the sorted set $A$ is $\floor{1+q(n-1)}$ (respectively $\ceil{1+q(n-1)}$) for  $0 \leq q \leq 1$. 
\end{definition}

The accuracy related to the estimation of a quantile can be defined either as rank or relative accuracy. In this paper, we deal with algorithms that provide rank accuracy, which is defined as follows.

\begin{definition} (Rank accuracy) For all items $v$ and a given tolerance $\epsilon$, return an estimated rank $\tilde{R}$ such that $\abs{\tilde{R}(v) - R(v)} \leq \epsilon n$.
\end{definition}

Next, we introduce the main concepts underlying DP. We focus on the so-called \textit{central model} of DP. Actually, two definitions are possible, as follows.

\begin{definition} \label{udf} (Unbounded differential privacy, also known as the add-remove model \cite{10.1007/11681878_14} \cite{TCS-042}) 
Two datasets $x$ and $x^{\prime}$ are considered neighbors if $x^{\prime}$ can be obtained from $x$ by adding or removing one row. Under unbounded DP, the sizes of $x$ and $x^{\prime}$  are different (by one row): $\left|x \backslash x^{\prime}\right|+\left|x^{\prime} \backslash x\right|=1$. 
\end{definition}

\begin{definition} \label{bdp} (Bounded differential privacy, also known as the swap or the update/replace model \cite{10.1007/11681878_14} \cite{Vadhan2017}) 
Two datasets $x$ and $x^{\prime}$ are considered neighbors if $x^{\prime}$ can be obtained from $x$ by changing one row. Under bounded DP, the sizes of $x$ and $x^{\prime}$ are equal: $\left|x \backslash x^{\prime}\right|=1 \text { and }\left|x^{\prime} \backslash x\right|=1$.
\end{definition}

In this paper, we adopt bounded DP. Next, we define $\epsilon$-DP. 

\begin{definition} \label{pdp} ($\epsilon$-differential privacy) A function which satisfies DP is called a mechanism; we say that a mechanism $\mathcal{F}$ satisfies pure DP if for all neighboring datasets  $x$ and $x^{\prime}$ and all possible sets of outputs $\mathcal{S}$, it holds that $\frac{\operatorname{Pr}[\mathcal{F}(x) \in \mathcal{S}]}{\operatorname{Pr}\left[\mathcal{F}\left(x^{\prime}\right) \in \mathcal{S} \right]} \leq e^\epsilon$. The $\epsilon$ parameter in the definition is called the privacy parameter or privacy budget.
\end{definition}

The $\epsilon$ parameter is strictly related to the desired amount of privacy. In practice, there is trade-off going on, since smaller values of this parameter provide higher privacy but at the cost of less \emph{utility} and vice-versa. In this context, utility refers to the possibility of using the obtained results for further investigations, namely statistical analyses. Therefore, the trade-off may be understood considering that setting $\epsilon$ to a small value require the mechanism $\mathcal{F}$ to provide very similar outputs when instantiated on similar inputs (so, higher privacy, obtained by injecting more noise which in turn undermines utility); on the contrary, a large value provides less similarity of the outputs (so, less privacy but increased utility). Besides pure DP, a different notion, called approximate (or, alternatively, $\epsilon, \delta$) DP, is also available. 

\begin{definition} \label{adp} ($(\epsilon, \delta)$-differential privacy) A mechanism $\mathcal{F}$ satisfies ($\epsilon, \delta$)-DP if for all neighboring datasets  $x$ and $x^{\prime}$ and all possible sets of outputs $\mathcal{S}$, it holds that $\operatorname{Pr}[\mathcal{F}(x) \in \mathcal{S}] \leq e^\epsilon \operatorname{Pr}\left[\mathcal{F}\left(x^{\prime}\right) \in \mathcal{S} \right]+\delta$, where the privacy parameter $0 \leq \delta < 1$ represents a failure probability.
\end{definition}

The definition implies that (i) with probability $1-\delta$ it holds that $\frac{\operatorname{Pr}[\mathcal{F}(x) \in \mathcal{S}]}{\operatorname{Pr}\left[\mathcal{F}\left(x^{\prime}\right) \in \mathcal{S} \right]} \leq e^\epsilon$ and (ii) with probability $\delta$ no guarantee holds. As a consequence, $\delta$ is required to be very small.

In order to define a mechanism, we need to introduce the notion of sensitivity. In practice, the sensitivity of a function reflects the amount the function's output will change when its input changes. Formally, given the universe of datasets, denoted by $\mathcal{D}$, the sensitivity  of a function $f$, called \emph{global sensitivity}, is defined as follows.

\begin{definition} \label{sensitivity} (Global sensitivity) Given a function $f: \mathcal{D} \rightarrow \mathbb{R}$ mapping a dataset in $\mathcal{D}$ to a real number, the global sensitivity of $f$ is $G S(f)=\max _{x, x^{\prime}: d\left(x, x^{\prime}\right)\leq 1}\left|f(x)-f\left(x^{\prime}\right)\right|$, where $d\left(x, x^{\prime}\right)$ represents the distance between two datasets  $x$, $x^{\prime}$.
\end{definition}

We now define two mechanisms, respectively the Laplace and the Gaussian mechanism. The former must be used with pure DP, the latter with approximate DP. 

\begin{definition} \label{laplace} (Laplace mechanism) Given a function $f:~\mathcal{D}~\rightarrow~\mathbb{R}$ mapping a dataset in $\mathcal{D}$ to a real number, $\mathcal{F}(x)=f(x)+\operatorname{Lap}\left(\frac{s}{\epsilon}\right)$ satisfies $\epsilon$-DP. $Lap(S)$ denotes sampling from the Laplace distribution with center 0 and scale $S$, whilst $s$ is the sensitivity of $f$.
\end{definition}

\begin{definition} \label{gaussian} (Gaussian mechanism) Given a function $f:~\mathcal{D}~\rightarrow~\mathbb{R}$ mapping a dataset in $\mathcal{D}$ to a real number, $\mathcal{F}(x)=f(x)+\mathcal{N}\left(\sigma^2\right)$ satisfies $(\epsilon, \delta)$-DP, where $\sigma^2=\frac{2 s^2 \ln (1.25 / \delta)}{\epsilon^2}$ and $s$ is the sensitivity of $f$. $\mathcal{N}\left(\sigma^2\right)$ denotes sampling from the Gaussian (normal) distribution with center 0 and variance $\sigma^2$. 
\end{definition}

The Gaussian mechanism also satifies a stronger notion of privacy, known as $\rho$ zero-concentrated differential privacy ($\rho$-zCDP); its definition uses a single privacy parameter $\rho$, and lies between pure and approximate DP. Moreover, $\rho$-zCDP has been shown to be equivalent (i.e., it can be translated) to standard notions of privacy.

\begin{definition} \label{zCDP} ($\rho$-zCDP) A mechanism $\mathcal{F}$ satisfies zero-concentrated DP if for all neighboring datasets  $x$ and $x^{\prime}$ and all $\alpha \in (1, \infty)$, it holds that $D_\alpha\left(\mathcal{F}(x) \| \mathcal{F}\left(x^{\prime}\right)\right) \leq \rho \alpha$, where $D_\alpha(P \| Q)=\frac{1}{\alpha-1} \ln E_{x \sim Q}\left(\frac{P(x)}{Q(x)}\right)^\alpha$ is the Rényi divergence.
\end{definition}

It can be shown that $\rho$-zCDP can be converted to $(\epsilon, \delta)$-DP as follows. If the mechanism $\mathcal{F}$ satisfies $\rho$-zCDP, then for $\delta > 0$ it also satisfies $(\epsilon, \delta)$-differential privacy for $\epsilon=\rho+2 \sqrt{\rho \log (1 / \delta)}$. Moreover the Gaussian mechanism can be adapted to work with $\rho$-zCDP as follows.

\begin{definition} \label{rzCDP-gaussian} ($\rho$-zCDP Gaussian mechanism) Given a function $f:~\mathcal{D}~\rightarrow~\mathbb{R}$ mapping a dataset in $\mathcal{D}$ to a real number, $\mathcal{F}(x)=f(x)+\mathcal{N}\left(\sigma^2\right) \text { where } \sigma^2=\frac{s^2}{2 \rho}$ satisfies $\rho$-zCDP, where $s$ is the sensitivity of $f$.
\end{definition}

We briefly introduce the concept of utility, which quantifies how much a DP result is useful for a subsequent data analysis. Therefore, the analysis to be performed plays a key role here, since DP results affected by a significant error may or may not be useful to the analyst. One way to overcame the dependence from the analysis is the use of the related concept of accuracy, which is the distance between the true value computed without DP and the DP released value. Therefore, accuracy is often used in place of utility, because more accurate results are generally more useful for an analysis. The so-called $(\alpha, \beta)$-accuracy framework \cite{DBS-066} can be used to measure accuracy. Here, $\alpha$ represents an upper bound on the absolute error committed, whilst $\beta$ is the probability to violate this bound.

\begin{definition} \label{accuracy} ($(\alpha, \beta)$-accuracy) Given a function $f: \mathcal{D} \rightarrow \mathbb{R}$ mapping a dataset $x \in \mathcal{D}$ to a real number, and a DP mechanism $\mathcal{M}_f: \mathcal{D} \rightarrow \mathbb{R}$, $\mathcal{M}_f$ is $(\alpha, \beta)$-accurate if $\operatorname{Pr}\left[\left\|f(x)-\mathcal{M}_{f(x)}\right\|_{\infty} \geq \alpha\right] \leq \beta$.	
\end{definition}

It can be shown \cite{DBS-066}, starting from the Cumulative Distribution Function for the Laplace distribution $\operatorname{Lap}(0,b)$, that the Laplace mechanism is $(\alpha, \beta)$-accurate with
\begin{equation}
\label{alpha}
\alpha=\ln \left(\frac{1}{\beta}\right) \cdot\left(\frac{s}{\epsilon}\right).
\end{equation}

Regarding the Gaussian and the $\rho$-zCDP mechanisms, we did not find in the literature a corresponding derivation for the $\alpha$ value; as an additional contribution, here we derive their analytical form. We start by considering the Cumulative Distribution Function for the normal distribution $\mathcal{N}\left(\sigma\right)$, which is $\frac{1}{2}\left[\operatorname{erfc}\left(-\frac{x}{\sigma \sqrt{2}}\right)\right]$. The probability $\operatorname{Pr}\left[ X > x \right]$ is $1 - \frac{1}{2}\left[\operatorname{erfc}\left(-\frac{x}{\sigma \sqrt{2}}\right)\right]$, so that, substituting $x = t \sigma$, we get:
\begin{equation}
\operatorname{Pr}\left[ X > x \right] = 1-\frac{1}{2} \operatorname{erfc}\left[-\frac{t}{\sqrt{2}}\right].
\end{equation}

Therefore, we need to solve, taking into account that $0 < \beta < 1$, the following equation, with regard to $t$:
\begin{equation}
1-\frac{1}{2} \operatorname{erfc}\left[-\frac{t}{\sqrt{2}}\right] \leq \beta
\end{equation}
obtaining 
\begin{equation}
	t \geq -\sqrt{2} \ \operatorname{erfc}^{-1}(2 (1 - \beta)).
\end{equation}
\noindent It follows that the Gaussian mechanism is $(\alpha, \beta)$-accurate with
\begin{equation}
\label{alpha-gauss}
\alpha = (-\sqrt{2} \ \operatorname{erfc}^{-1}(2 (1 - \beta))) \sqrt{\frac{2 s^2 \ln \left(\frac{1.25}{\delta}\right)}{\epsilon^2}}.
\end{equation}

Reasoning as before, we can also derive that the $\rho$-zCDP mechanism is $(\alpha, \beta)$-accurate with
\begin{equation}
\label{alpha-rho}
\alpha = (-\sqrt{2} \ \operatorname{erfc}^{-1}(2 (1 - \beta))) \sqrt{\frac{s^2}{2 \rho}}.
\end{equation}

Next, we introduce the \textsc{Frugal-1U} algorithm.

\section{The \textsc{Frugal-1U} Algorithm}
\label{frugal-1u-alg}
Among the many algorithms that have been designed for tracking quantiles in a streaming setting, \textsc{Frugal} \cite{frugal} besides being fast and accurate, also restricts by design the amount of memory that can be used. It is well-known that in the streaming setting the main goal is to deliver a high-quality approximation of the result (this may provide either an additive or a multiplicative guarantee) by using the lowest possible amount of space. In practice, there is a tradeoff between the amount of space used by an algorithm and the corresponding accuracy that can be achieved. Surprisingly, \textsc{Frugal} only requires one unit of memory to track a quantile. The authors of \textsc{Frugal} have also designed a variant of the algorithm that uses two units of memory. In this Section, we introduce the one unit of memory version, which is called \textsc{Frugal-1U}. Algorithm \ref{Frugal-1U} provides the pseudo-code for \textsc{Frugal-1U}.

\begin{algorithm}
	\caption{Frugal-1U}
		\label{Frugal-1U}
	 \begin{algorithmic}[1]
		\Require{Data stream $S$, quantile $q$, one unit of memory $\tilde{m}$}
		\Ensure{estimated quantile value $\tilde{m}$}
		\State $\tilde{m} = 0$
		\ForAll {$s_i \in S$}
			\State $rand = random(0,1)$
			\If {$s_i > \tilde{m}$ and $rand > 1-q$}
				\State $\tilde{m} = \tilde{m} + 1$
			\ElsIf {$s_i < \tilde{m}$ and $rand > q$}				\State $\tilde{m} = \tilde{m} - 1$
			\EndIf
		\EndFor
		\State \Return $\tilde{m}$
	\end{algorithmic}
\end{algorithm}

The algorithm works as follows. First, $\tilde{m}$ is initialized to zero (however, note that it can be alternatively initialized to the value of the first stream item, in order to increase the speed of convergence of the estimate towards the value of the true quantile). This variable will be dynamically updated each time a new item $s_i$ arrives from the input stream $S$, and its value represents the estimate of the quantile $q$ being tracked. The update is quite simple, since it only requires $\tilde{m}$ to be increased or decreased by one. Specifically, a random number $0 < rand < 1$ is generated by using a pseudo-random number generator (the call $random(0,1)$ in the pseudo-code) and if the incoming stream item is greater than the estimate $\tilde{m}$ and $rand > 1-q$, then the estimate $\tilde{m}$ is increased, otherwise if the incoming stream item is smaller than the estimate $\tilde{m}$ and $rand > q$, then the estimate $\tilde{m}$ is decreased.  Obviously, the algorithm is really fast and can process an incoming item in worst-case $O(1)$ time. Therefore, a stream of length $n$ can be processed in worst-case $O(n)$ time and $O(1)$ space. 

Despite its simplicity, the algorithm provides good accuracy, as shown by the authors. The proof is challenging since the algorithm's analysis is quite involved. The complexity in the worst case is $O(n)$, since $n$ items are processed in worst case $O(1)$ time. 

Finally, the algorithm has been designed to deal with an input stream consisting of integer values distributed over the domain $[N] = \{1,2,3, \ldots, N\}$. This is not a limitation though, owing to the fact that one can process a stream of real values as follows: fix a desired precision, say three decimal digits, then each incoming stream item with real value can be converted to an integer by multiplying it by $10^3$ and then truncating the result by taking the floor. If the maximum number of digits following the decimal point is known in advance, truncation may be avoided altogether: letting $m$ by the maximum number of digits following the decimal point, it suffices to multiply by $10^m$. Obviously, the estimated quantile may be converted back to a real number dividing the result by the fixed precision selected or by $10^m$.

\section{Differentially-Private \textsc{Frugal-1U}}
\label{dp-frugal-1u}
In this Section, we analyze the \textsc{Frugal-1U} algorithm and design  DP variants of it. As shown in Section \ref{frugal-1u-alg}, the algorithm is quite simple. In order to estimate a quantile $q$, the current estimate $\tilde{m}$ is either incremented or decremented by one based on the value of the incoming stream item $s_i$. The increments are applied with probability $q$ and the decrements with probability $1-q$. 

Our DP versions of the algorithm are based on the definition of bounded DP (see Definition \ref{bdp}), in which two datasets $x$ and $x^\prime$ are considered neighbors if $x^\prime$ can be obtained from $x$ by changing one row. Owing to our choice, we need to analyze the impact of changing one incoming stream item with a different one on the quantile estimate $\tilde{m}$. The following Lemma is our fundamental result to then obtain DP versions of Frugal-1U.

\begin{lemma}
\label{frugal-1u-sensitivity}
Under bounded DP, the global sensitivity of the \textsc{Frugal-1U} algorithm is 2.
\end{lemma}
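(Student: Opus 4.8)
The plan is to treat \textsc{Frugal-1U} as a deterministic function of the input stream once its internal randomness is fixed. Concretely, condition on the sequence $r_1, r_2, \dots$ of values returned by the calls to $random(0,1)$, and let $f$ be the resulting deterministic map sending a stream to the final value of $\tilde m$; the global sensitivity of the algorithm is then $\sup \abs{f(x) - f(x')}$ over all such coin sequences and all bounded-DP neighbors $x, x'$, i.e.\ streams that coincide in every position but one, say position $k$. Since \textsc{Frugal-1U} depends on the stream prefix only through $\tilde m$ and uses the same coins in both runs, the two executions on $x$ and $x'$ are identical through step $k-1$, so the value of $\tilde m$ entering step $k$ agrees in the two runs; at step $k$ each run modifies $\tilde m$ by an amount in $\set{-1,0,+1}$, hence right after step $k$ the two estimates differ by at most $2$.

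The heart of the proof is the invariant that, from step $k$ on, the absolute gap between the two running estimates never increases. Fix any later step and let $a, b$ be the current estimates, WLOG $a \ge b$; crucially both runs now read the \emph{same} item $s$ (positions $>k$ are unchanged) and the \emph{same} coin $r$. Write the one-step update as a function $g(m)$ of the current estimate (with $s, r, q$ fixed): $g(m) = +1$ if $m < s$ and $r > 1-q$; $g(m) = -1$ if $m > s$ and $r > q$; $g(m) = 0$ otherwise. The elementary observation that makes everything work is that $g$ is \emph{non-increasing} in $m$ and takes values in $\set{-1,0,+1}$. Hence $g(a) \le g(b)$, and the signed gap $d = a - b \ge 0$ is transformed into $d + g(a) - g(b) \in [d-2,\, d]$, which is contained in $[-d,\,d]$ when $d \ge 1$ and equals $0$ when $d = 0$ (in which case $a = b$). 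Iterating over the remaining steps gives $\abs{f(x) - f(x')} \le 2$, so $GS(\textsc{Frugal-1U}) \le 2$.

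To see the bound is tight I would build a single coin sequence and a pair of neighbors realizing the value $2$. Take $q = 1/2$, feed a prefix of identical large items with coins set just above $\max(q, 1-q)$ so that $\tilde m$ is driven up to some positive value $v$ with $1 < v < N$, and then let the two streams differ only in the last item, one copy exceeding $v$ and the other below $v$, with the final coin again above $\max(q, 1-q)$; one run increments to $v+1$, the other decrements to $v-1$, and the two outputs differ by exactly $2$. Together with the upper bound this yields $GS(\textsc{Frugal-1U}) = 2$.

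I expect the only genuine difficulty to be the monotonicity check for $g$ at the boundary configurations $s = a$ and $s = b$, where a tie between the incoming item and one of the two estimates occurs: these are precisely the cases that could a priori let one run step up while the other steps down and thereby inflate the gap to $3$, and it is the monotonicity of $g$ that rules them out. The identical-prefix observation, the $\pm1$ bound at step $k$, and the lower-bound construction are all routine.
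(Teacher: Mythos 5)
Your proof is correct, and it is in fact more complete than the paper's. The paper's argument is a purely local case analysis at the changed position $i$: it couples the two runs on the same coins, observes that up to step $i-1$ they agree, and that at step $i$ one run moves by $+1$ (or $-1$) while the other moves by $-1$, $0$, or $+1$, so the two estimates differ by at most $2$ immediately after step $i$ --- and it stops there, implicitly identifying the output with $\tilde m_i$ rather than $\tilde m_n$. Your proof supplies exactly the piece that is missing from that argument: the non-expansion invariant for steps $k+1,\dots,n$, proved via the observation that the one-step increment $g(m)\in\set{-1,0,+1}$ is non-increasing in $m$ for fixed item and coin, so that the signed gap $d=a-b\ge 0$ maps into $[d-2,d]\subseteq[-d,d]$. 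Without this step the claimed bound does not actually follow, since a priori the gap of $2$ created at position $k$ could be amplified downstream; your tie analysis at $s=a$ and $s=b$ is precisely where a naive argument could leak a gap of $3$, and the monotonicity of $g$ correctly rules it out. You also add two things the paper omits: an explicit lower-bound construction showing the value $2$ is attained (the paper asserts equality but only argues an upper bound at one step), and the correct formalization of "sensitivity of a randomized algorithm" as a worst case over a shared coin sequence, which is what is actually needed for the Laplace/Gaussian mechanisms applied to \textsc{Frugal-1U} to be valid. In short: same coupling idea at the changed position, but your monotone-update lemma and tightness example turn the paper's sketch into a complete proof.
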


\begin{proof}
 Let $s_i$ be the item to be changed, and $s_j \neq s_i$ the item replacing $s_i$. There are a few symmetric cases to consider. Let $s_i$ be the $i$-th stream item, so that the length of the stream $S$ is equal to $i-1$ before the arrival of $s_i$ and equal to $i$ immediately after. Moreover, denote by $\tilde{m}_{i-1}$ the estimate of the quantile $q$ before the arrival of $s_i$ and by $\tilde{m}_i$ after seeing the item $s_i$. Suppose that the arrival of $s_i$ causes $\tilde{m}_i$ to increase by one with regard to $\tilde{m}_{i-1}$, i.e., $\tilde{m}_i = \tilde{m}_{i-1} + 1$. Substituting $s_i$ with $s_j$ therefore can lead to the following cases: either $\tilde{m}_i = \tilde{m}_{i-1} - 1$ or $\tilde{m}_i = \tilde{m}_{i-1} + 1$. Therefore, the estimate is unchanged or it is increased by 2. Similarly, assuming that the arrival of $s_i$ causes $\tilde{m}_i$ to decrease by one with regard to $\tilde{m}_{i-1}$, i.e., $\tilde{m}_i = \tilde{m}_{i-1} - 1$, then there are, symmetrically, the following cases: either $\tilde{m}_i = \tilde{m}_{i-1} + 1$ or $\tilde{m}_i = \tilde{m}_{i-1} - 1$. Therefore, the estimate is unchanged or is decremented by 2.   It follows that the global sensitivity of the algorithm is $\max _{x, x^{\prime}: d\left(x, x^{\prime}\right)\leq 1}\left|f(x)-f\left(x^{\prime}\right)\right| = 2$.
\end{proof}

Since the global sensitivity is 2, \textsc{DP-Frugal-1U-L}, a pure DP (see Definition \ref{pdp}) variant of the algorithm can be obtained by using the Laplace mechanism. We are now in the position to state the following theorem.

\begin{theorem}
\textsc{Frugal-1U} can be made $\epsilon$-DP by adding to the quantile estimate returned by the algorithm noise sampled from a Laplace distribution  with center 0 and scale $S$ as follows: $\tilde{m} = \tilde{m} + Lap(\frac{2}{\epsilon})$. 
\end{theorem}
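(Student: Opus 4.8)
The plan is to obtain the statement as an essentially immediate consequence of Lemma~\ref{frugal-1u-sensitivity} (global sensitivity $2$) and the Laplace mechanism of Definition~\ref{laplace} (scale $s/\epsilon$ with $s$ the sensitivity). The only point requiring care is that Definition~\ref{laplace} is phrased for a \emph{deterministic} function $f:\mathcal{D}\to\mathbb{R}$, whereas \textsc{Frugal-1U} is itself randomized through the repeated calls to $random(0,1)$. I would therefore begin by making the internal randomness explicit: for a stream of length $n$, let $r=(r_1,\dots,r_n)$ collect the $n$ values drawn by $random(0,1)$, and let $\tilde{m}(x;r)$ denote the output of \textsc{Frugal-1U} on input stream $x$ with the coin tosses fixed to $r$. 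Under bounded DP (Definition~\ref{bdp}) neighboring streams $x$ and $x'$ have the same size $n$, so the number of draws, and hence the distribution $P_r$ of $r$, is the same for $x$ and $x'$ and does not depend on the data.

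Next I would note that the argument in the proof of Lemma~\ref{frugal-1u-sensitivity} applies \emph{per realization} of the internal randomness: when the single differing item is replaced while keeping $r$ unchanged, the running estimate moves by at most $2$. Thus, for every fixed $r$, the deterministic map $x\mapsto\tilde{m}(x;r)$ has global sensitivity at most $2$, and Definition~\ref{laplace} with $s=2$ yields, for all neighboring $x,x'$ and every measurable $\mathcal{S}\subseteq\mathbb{R}$,
\[
\operatorname{Pr}_{L\sim Lap(2/\epsilon)}\!\bigl[\tilde{m}(x;r)+L\in\mathcal{S}\bigr]\;\leq\;e^{\epsilon}\,\operatorname{Pr}_{L\sim Lap(2/\epsilon)}\!\bigl[\tilde{m}(x';r)+L\in\mathcal{S}\bigr].
\]

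Finally I would remove the conditioning by integrating over $r$. Writing $\mathcal{F}(x)=\tilde{m}(x;r)+Lap(2/\epsilon)$ with $r\sim P_r$ and the Laplace noise $L$ independent of $r$, and using that $P_r$ is common to $x$ and $x'$,
\[
\operatorname{Pr}[\mathcal{F}(x)\in\mathcal{S}]=\int\operatorname{Pr}_{L}\!\bigl[\tilde{m}(x;r)+L\in\mathcal{S}\bigr]dP_r\leq e^{\epsilon}\!\int\operatorname{Pr}_{L}\!\bigl[\tilde{m}(x';r)+L\in\mathcal{S}\bigr]dP_r=e^{\epsilon}\operatorname{Pr}[\mathcal{F}(x')\in\mathcal{S}],
\]
which is precisely the $\epsilon$-DP condition of Definition~\ref{pdp}.

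The main obstacle is exactly this conditioning-and-averaging step, i.e., justifying that a mixture (over data-independent, identically distributed coin tosses) of per-realization $\epsilon$-DP Laplace mechanisms is again $\epsilon$-DP; the equal stream length guaranteed by bounded DP is what makes the coin-toss distributions coincide, and is therefore the hypothesis one must not lose track of. Once that is in place, the result is just the instantiation of the Laplace mechanism with sensitivity $2$.
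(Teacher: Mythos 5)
Your proposal is correct and follows essentially the same route as the paper, whose proof is simply the one-line observation that the result follows from Lemma~\ref{frugal-1u-sensitivity} together with Definition~\ref{laplace}. The extra care you take in conditioning on the internal coin tosses $r$ (noting that bounded DP gives equal stream lengths, hence a data-independent distribution for $r$, so that the per-realization Laplace guarantee survives averaging) is a detail the paper leaves implicit, and it is a worthwhile tightening rather than a different argument.
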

 
\begin{proof}
It follows straight from Lemma \ref{frugal-1u-sensitivity} and Definition \ref{laplace}.
\end{proof}

Next, we design \textsc{DP-Frugal-1U-G}, a $(\epsilon, \delta)$-DP (see Definition \ref{adp}) version of the algorithm, by using the Gaussian mechanism.

\begin{theorem}
\textsc{Frugal-1U} can be made $(\epsilon, \delta)$-DP by adding to the quantile estimate returned by the algorithm noise sampled from a Gaussian distribution as follows: $\mathcal{F}(x)=f(x)+\mathcal{N}\left(\sigma^2\right)$ where $\sigma^2=\frac{8 \ln (1.25 / \delta)}{\epsilon^2}$. 
\end{theorem}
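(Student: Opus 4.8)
The plan is to derive the theorem as an immediate corollary of Lemma~\ref{frugal-1u-sensitivity} together with the Gaussian mechanism of Definition~\ref{gaussian}, in complete analogy with the Laplace case handled in the preceding theorem. First I would invoke Lemma~\ref{frugal-1u-sensitivity}, which establishes that, under bounded DP, the global sensitivity of the estimate $\tilde{m}$ returned by \textsc{Frugal-1U} equals $s = 2$: replacing a single stream item changes the returned value by at most $2$. Second, Definition~\ref{gaussian} states that for any real-valued function $f$ of sensitivity $s$, the mechanism $\mathcal{F}(x) = f(x) + \mathcal{N}(\sigma^2)$ with $\sigma^2 = \frac{2 s^2 \ln(1.25/\delta)}{\epsilon^2}$ satisfies $(\epsilon,\delta)$-DP. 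Taking $f$ to be the map sending the input stream to $\tilde{m}$ and substituting $s = 2$ gives $\sigma^2 = \frac{2 \cdot 2^2 \ln(1.25/\delta)}{\epsilon^2} = \frac{8 \ln(1.25/\delta)}{\epsilon^2}$, which is exactly the variance in the statement, so the claim follows.

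The one point I would spell out carefully is that \textsc{Frugal-1U} is itself a randomized algorithm, whereas Definition~\ref{gaussian} is phrased for a deterministic $f$. The way to reconcile this is to couple the internal coin tosses of Algorithm~\ref{Frugal-1U} across the two neighboring streams: for every fixed realization of those coins the stream-to-$\tilde{m}$ map is deterministic and, by Lemma~\ref{frugal-1u-sensitivity}, has sensitivity at most $2$ uniformly in the coins; the Gaussian mechanism then yields $(\epsilon,\delta)$-indistinguishability of the noisy outputs conditioned on the coins, and since the coin distribution is the same for both streams, mixing over the coins — using the convexity (quasi-convexity) property of the DP guarantee, equivalently treating the added Gaussian as independent post-processing of a fixed realization — preserves $(\epsilon,\delta)$-DP for the overall mechanism.

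I do not anticipate any genuine obstacle here: the substantive work is entirely contained in Lemma~\ref{frugal-1u-sensitivity}, and what remains is a one-line substitution into the Gaussian-mechanism variance formula. The only thing that needs attention, across this theorem and the companion $\rho$-zCDP and Laplace instantiations, is that the correct sensitivity constant $s = 2$ is carried through consistently in every noise-scale expression.
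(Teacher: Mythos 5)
Your proposal is correct and follows exactly the paper's route: the theorem is obtained by substituting the sensitivity $s=2$ from Lemma~\ref{frugal-1u-sensitivity} into the Gaussian-mechanism variance of Definition~\ref{gaussian}, yielding $\sigma^2 = \frac{2\cdot 2^2 \ln(1.25/\delta)}{\epsilon^2} = \frac{8\ln(1.25/\delta)}{\epsilon^2}$. Your additional remark about conditioning on the algorithm's internal coin tosses is a careful refinement that the paper leaves implicit, but it does not change the argument.
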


\begin{proof}
It follows straight from Lemma \ref{frugal-1u-sensitivity} and Definition \ref{gaussian}.
\end{proof}

Finally, we design \textsc{DP-Frugal-1U-$\rho$}, a $\rho$-zCDF version of the algorithm.

\begin{theorem}
\textsc{Frugal-1U} can be made $\rho$-zCDF by adding to the quantile estimate returned by the algorithm noise sampled from a Gaussian distribution as follows: $\mathcal{F}(x)=f(x)+\mathcal{N}\left(\sigma^2\right) \text { where } \sigma^2=\frac{2}{\rho}$. 
\end{theorem}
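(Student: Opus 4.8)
The plan is to obtain this statement as an immediate consequence of Lemma~\ref{frugal-1u-sensitivity} together with the $\rho$-zCDP Gaussian mechanism of Definition~\ref{rzCDP-gaussian}, exactly paralleling the proofs of the two preceding theorems. First I would fix notation by letting $f$ denote the end-to-end map that sends an input stream to the value $\tilde m$ returned by \textsc{Frugal-1U} (with the algorithm's internal randomness regarded as fixed). Lemma~\ref{frugal-1u-sensitivity} then tells us that, under bounded DP, the global sensitivity of $f$ is $s = 2$: replacing one stream item by another changes the returned estimate by at most $2$.

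Next I would invoke Definition~\ref{rzCDP-gaussian} directly: for a real-valued function of sensitivity $s$, the mechanism $\mathcal{F}(x) = f(x) + \mathcal{N}(\sigma^2)$ with $\sigma^2 = \frac{s^2}{2\rho}$ satisfies $\rho$-zCDP. Substituting $s = 2$ gives $\sigma^2 = \frac{4}{2\rho} = \frac{2}{\rho}$, which is exactly the variance asserted in the theorem. Therefore adding noise drawn from $\mathcal{N}(2/\rho)$ to the output of \textsc{Frugal-1U} yields a $\rho$-zCDP mechanism, and the statement follows. If desired, I would also note in passing that the $(\alpha,\beta)$-accuracy of this mechanism is then given by Eq.~\eqref{alpha-rho} with $s=2$.

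The closest thing to an obstacle — and it is more a matter of careful phrasing than of mathematical difficulty — is ensuring that the noise is applied \emph{once}, to the single scalar output $\tilde m$, rather than at each of the $n$ update steps. It is precisely this ``output perturbation'' viewpoint that keeps the relevant sensitivity equal to $2$ (as established by Lemma~\ref{frugal-1u-sensitivity}) and lets us use Definition~\ref{rzCDP-gaussian} as a black box with no composition argument. I would therefore make the identification of $f$ with the end-to-end stream-to-$\tilde m$ map explicit at the outset; with that in place, no further computation is required.
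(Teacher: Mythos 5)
Your proposal is correct and matches the paper's argument, which likewise derives the theorem directly from Lemma~\ref{frugal-1u-sensitivity} and Definition~\ref{rzCDP-gaussian} by substituting $s=2$ into $\sigma^2 = \frac{s^2}{2\rho}$. The additional remark about applying the noise once to the final output $\tilde m$ (rather than per update) is a sensible clarification consistent with the paper's own discussion at the end of Section~\ref{dp-frugal-1u}.
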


\begin{proof}
It follows straight from Lemma \ref{frugal-1u-sensitivity} and Definition \ref{rzCDP-gaussian}.
\end{proof}

Finally, we remark here that, contrary to many DP algorithms that initialize a data structure or a sketch using suitable noise, it is not possile to initialize the quantile estimate of \textsc{Frugal-1U} using the noise required by one of the proposed mechanisms. The reason is two-fold. First, the algorithm processes integer items, so that its initial estimate must be an integer as well whilst, in general, the noise is a floating point value. But, even assuming that we initialize the estimate to an integer value related to the noise (perhaps taking its floor or the ceiling), this will not help in any way since, by design, the algorithm adapts dynamically to the observed input items and converges to the estimated quantile. Therefore, the second reason is that the noise added will be silently discarded by the algorithm when converging to the quantile estimate. As a consequence, the noise must be added only after the algorithm termination to the returned estimated quantile. 

\section{Experimental Results}
\label{results}

 \begin{table}
 	\caption{Synthetic datasets.} 
	\label{datasets}
	\centering
	\begin{tabular}{llll}
		\textbf{Dataset} & \textbf{Distribution}& \textbf{Parameters}& \textbf{PDF}\\
		\hline
		D1& Uniform &  $[0, 1000]$ & \includegraphics[width=2cm]{./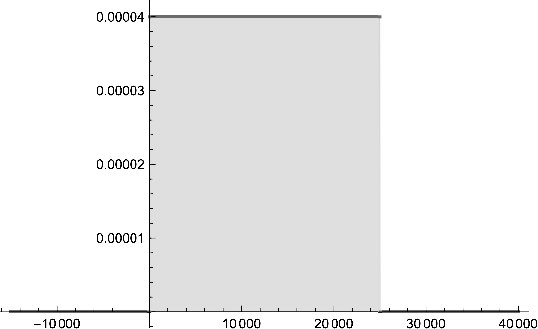}\\
		D2 & Chi square &   $\alpha = 5$ & \includegraphics[width=2cm]{./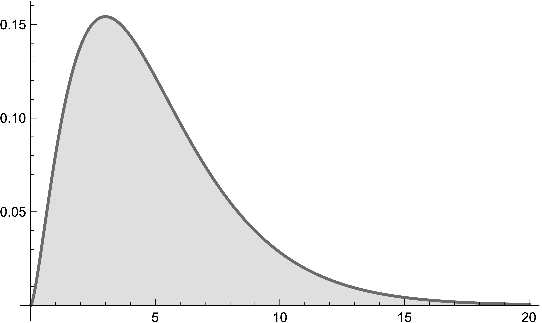}\\
		D3 & Exponential  & $\alpha = 0.5$ & \includegraphics[width=2cm]{./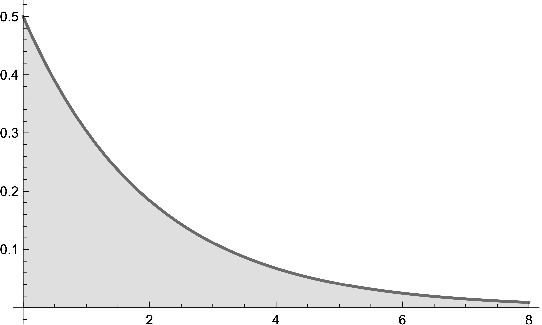}\\
		D4 & Lognormal & $\alpha =1, \beta = 1.5$ & \includegraphics[width=2cm]{./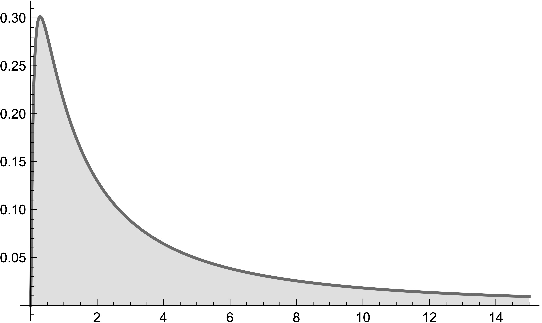}\\
		D5 & Normal & $\mu = 50, \sigma = 2$ & \includegraphics[width=2cm]{./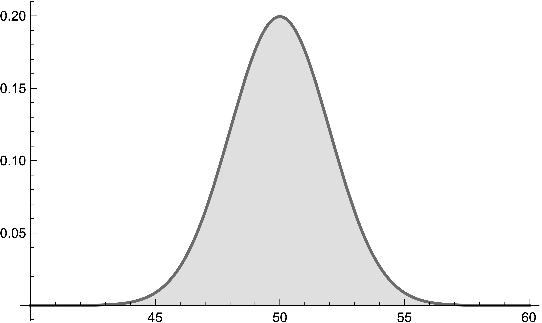}\\
		D6 & Cauchy & $\alpha = 10000, \beta = 1250$ & \includegraphics[width=2cm]{./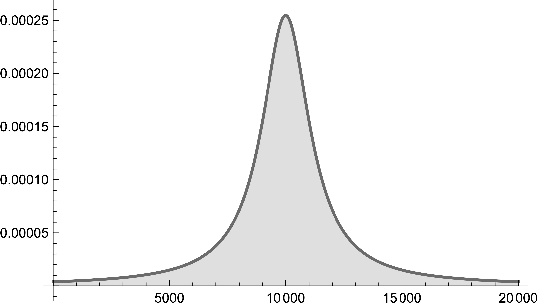}\\
		D7 & Extreme Value & $\alpha = 20, \beta = 2$ & \includegraphics[width=2cm]{./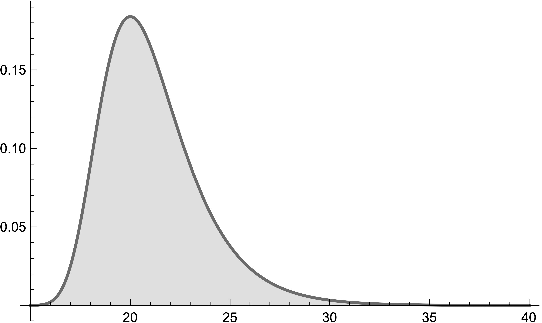}\\
		D8 & Gamma & $a = 2, b = 4$ & \includegraphics[width=2cm]{./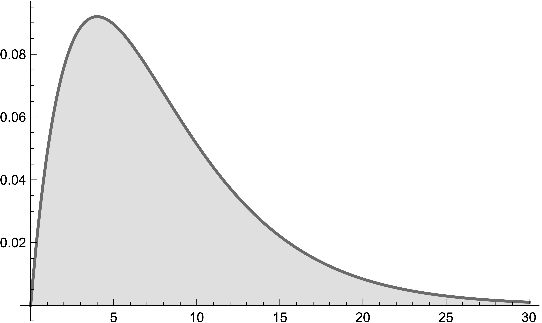}\\
		\hline
	\end{tabular}
\end{table}

 In this section we present and discuss the results of the experiments carried out for \textsc{Frugal-1U-L}, \textsc{Frugal-1U-G}, \textsc{Frugal-1U-$\rho$}. The experiments are limited to our algorithms owing to the fact that, to the best of our knowledge, there are no other frugal algorithms for the estimation of quantiles designed for the central model of differential privacy.  
 The source code has been compiled using the Apple clang compiler v15.0 with the following flags: -Os -std=c++14 (it is worth recalling  that on macOS the flag -Os optimizes for size and usually on this platform the resulting executable is faster than the executable obtained by compiling using the -O3 flag). The tests have been carried out on an Apple MacBook Pro laptop equipped with 64 GB of RAM and a 2.3 GHz 8-core Intel Core i9. The experiments have been repeated ten times for each specific category of test and the results have been averaged; the seeds used to initialize the pseudo-random generators are the same for each  experiment and algorithm being tested.
 
The tests have been performed on eight synthetic datasets, whose properties are summarized in Table \ref{datasets}. The experiments have been executed varying the stream length, the quantile, the privacy budget, $\epsilon$, $\delta$ and $\rho$. Table \ref{params} reports the default settings for the parameters. 

\begin{table}
	\caption{Default settings of the parameters.} 
	\label{params}
	\centering
	\begin{tabular}{lll}
		\textbf{Parameter} & \textbf{Values} & \textbf{Default}\\
		\hline
		quantile &   $\{0.1, 0.3, 0.5, 0.99\}$ & 0.99\\
		stream length & $\{ 10M, 50M, 75M, 100M \}$ & $10M$\\
		$\epsilon$ &   $\{0.1, 0.5, 1, 2\}$ & 1\\
		$\delta$ &   $\{0.01, 0.04, 0.08, 0.1\}$ & 0.04\\
		$\rho$ &   $\{0.1, 0.5, 1, 5\}$ & 1\\
		\hline
	\end{tabular}
\end{table}

\begin{figure*}
    \centering
    \begin{subfigure}{0.33\textwidth}
        \includegraphics[width=0.99\linewidth]{./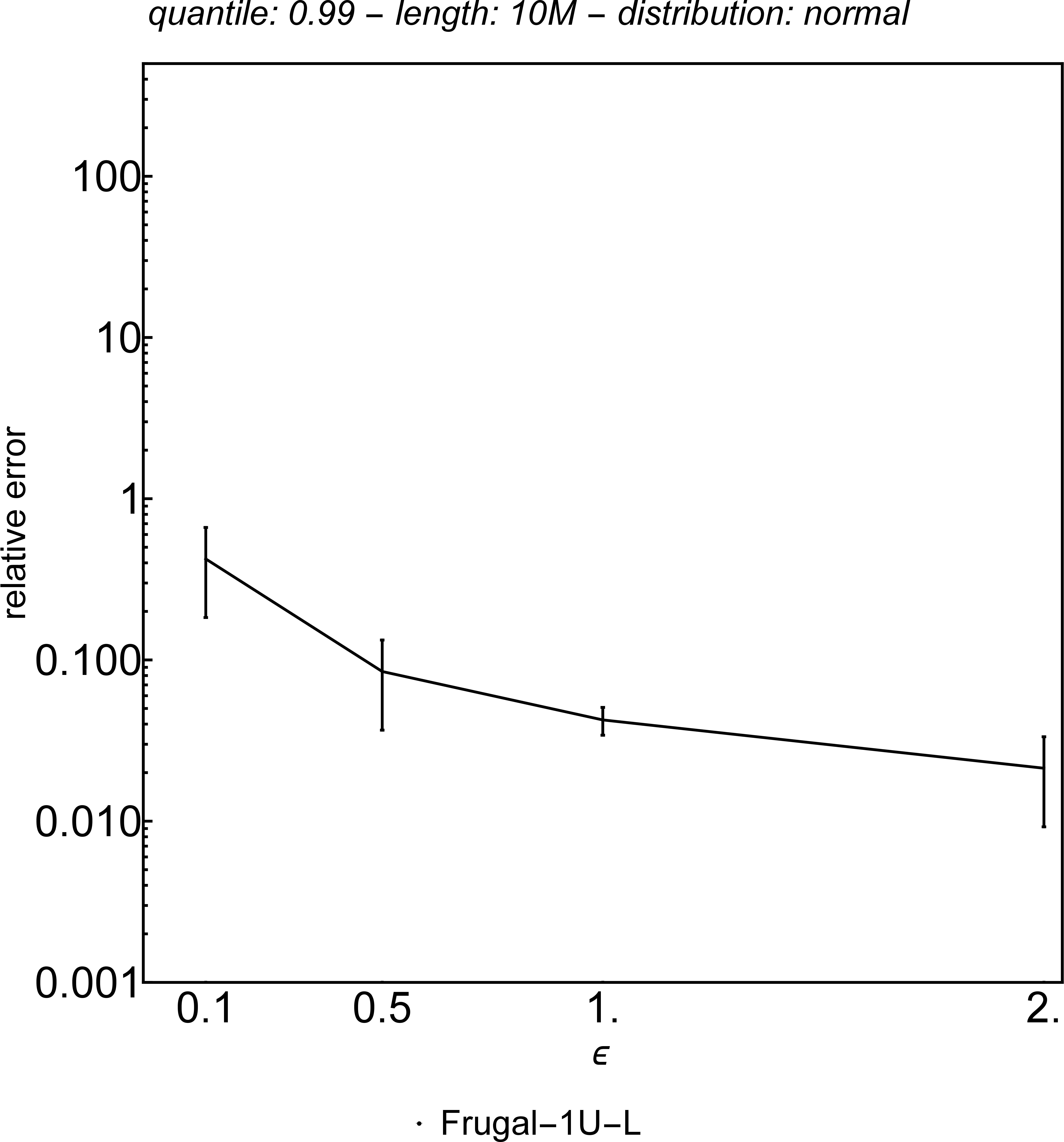}
        \caption{privacy budget}
        \label{fig:budget-all}
    \end{subfigure}
    \begin{subfigure}{0.33\textwidth}
        \includegraphics[width=0.99\linewidth]{./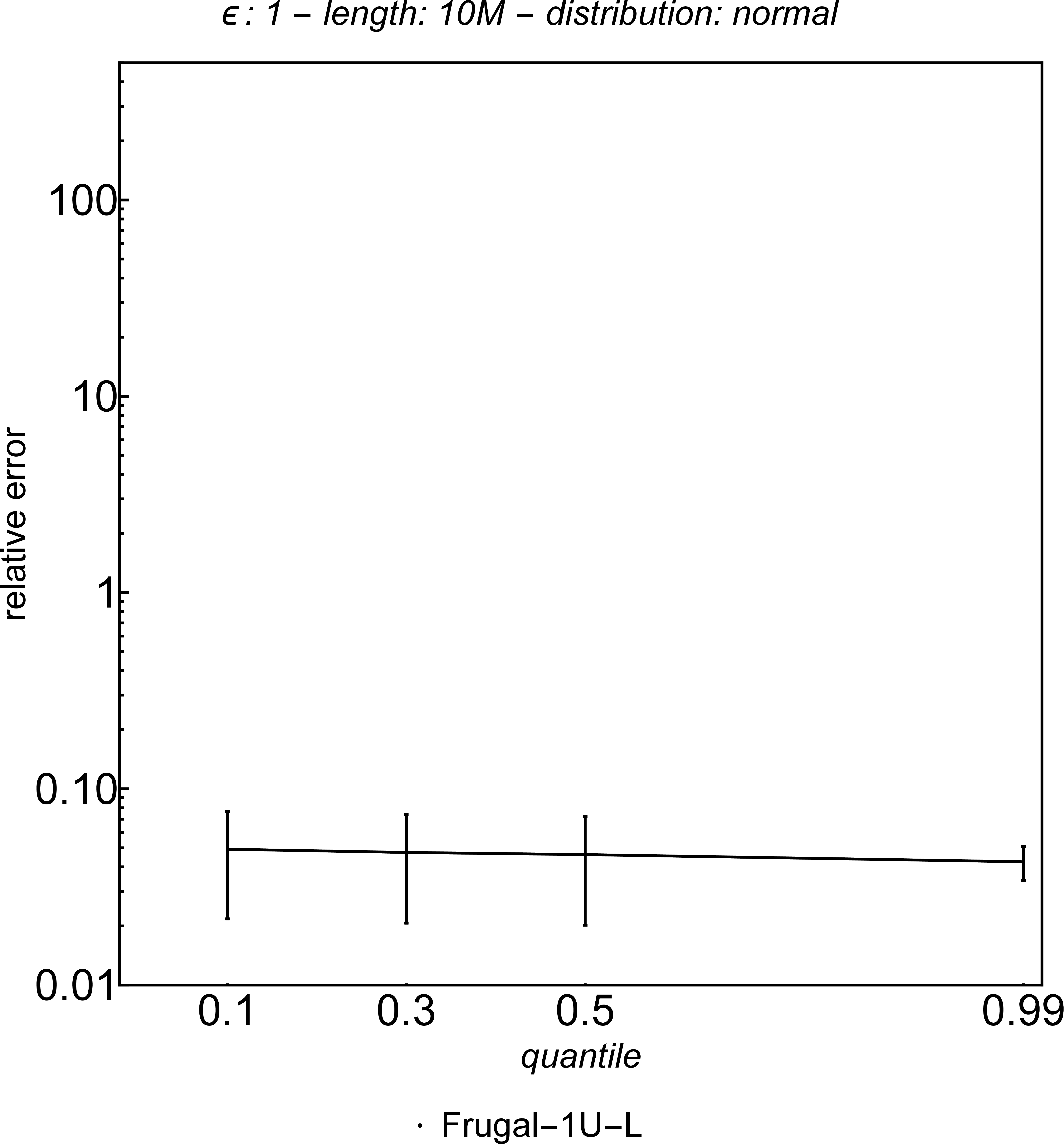}
        \caption{quantile}
        \label{fig:q-all}
    \end{subfigure}
    \begin{subfigure}{0.33\textwidth}
        \includegraphics[width=0.99\linewidth]{./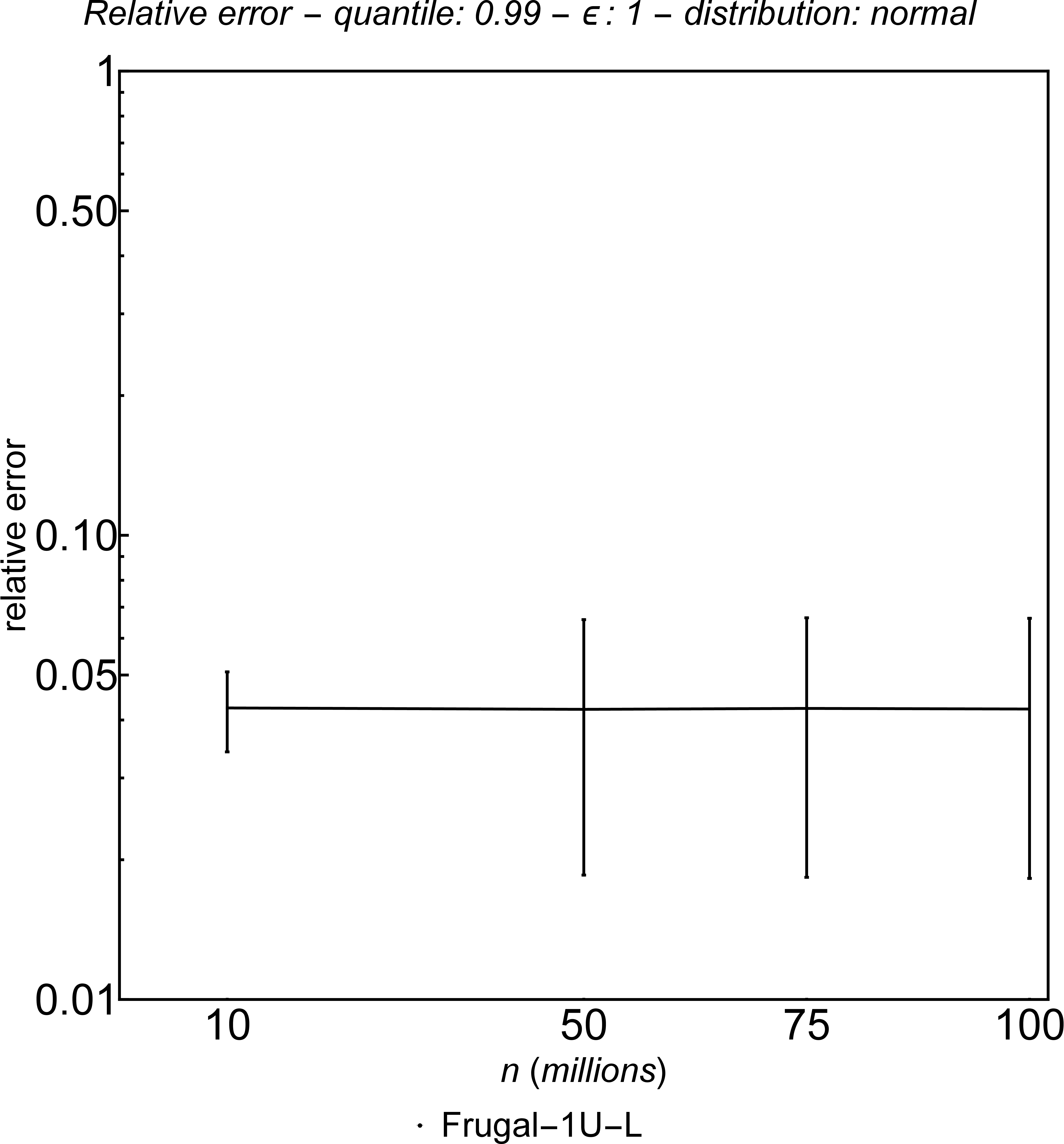}
        \caption{$n$}
        \label{fig:n-all}
    \end{subfigure}
    \begin{subfigure}{0.33\textwidth}
        \includegraphics[width=0.99\linewidth]{./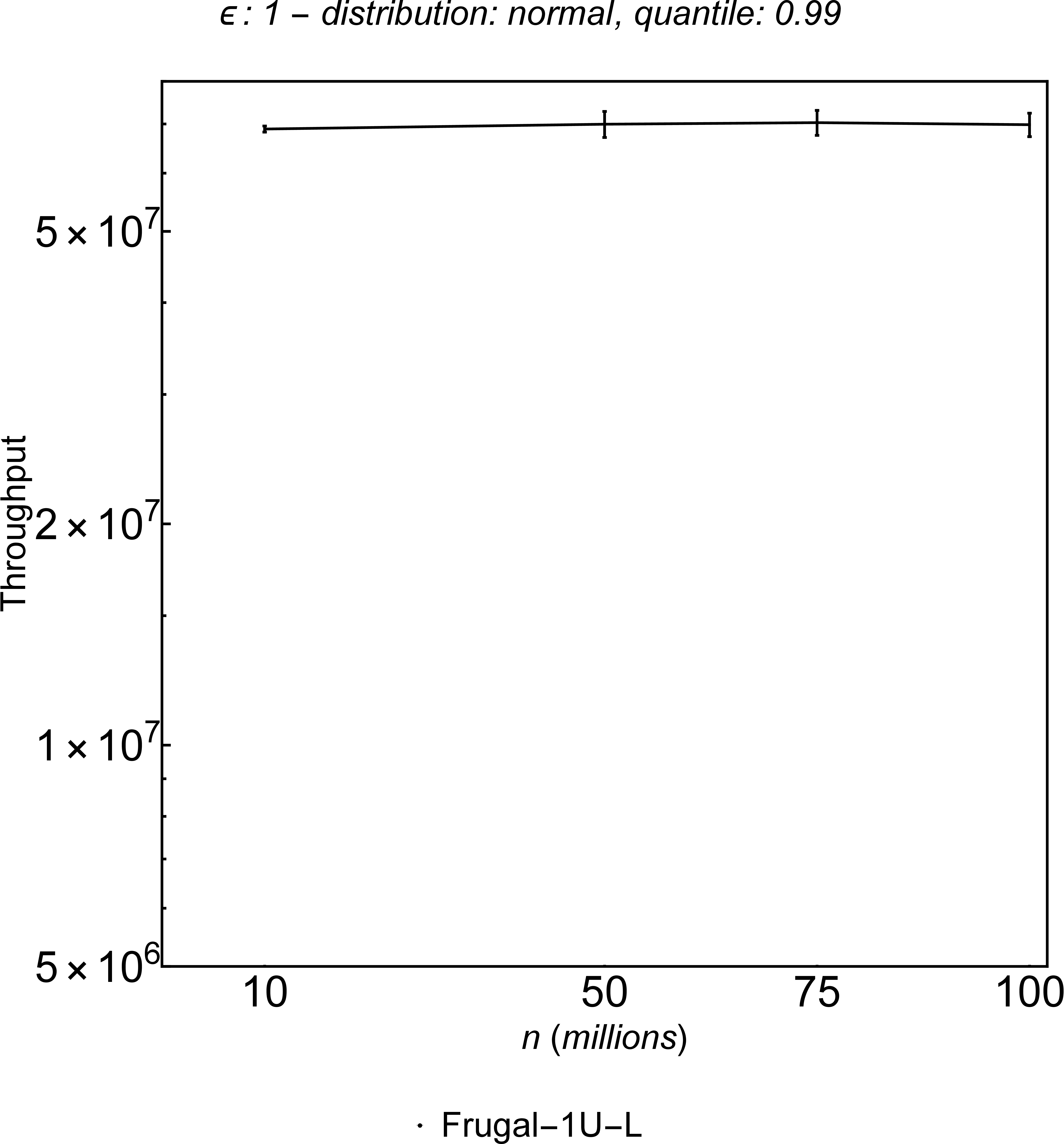}
        \caption{updates/s}
        \label{fig:throughput_all}
    \end{subfigure}
 
    \caption{\textsc{Frugal-1U-L}. Relative error varying the privacy budget $\epsilon$, the quantile $q$ and the stream size $n$. Throughput measured in updates/s.}
   \label{fig:1}
\end{figure*}

\begin{figure*}
    \centering
    \begin{subfigure}{0.33\textwidth}
        \includegraphics[width=0.99\linewidth]{./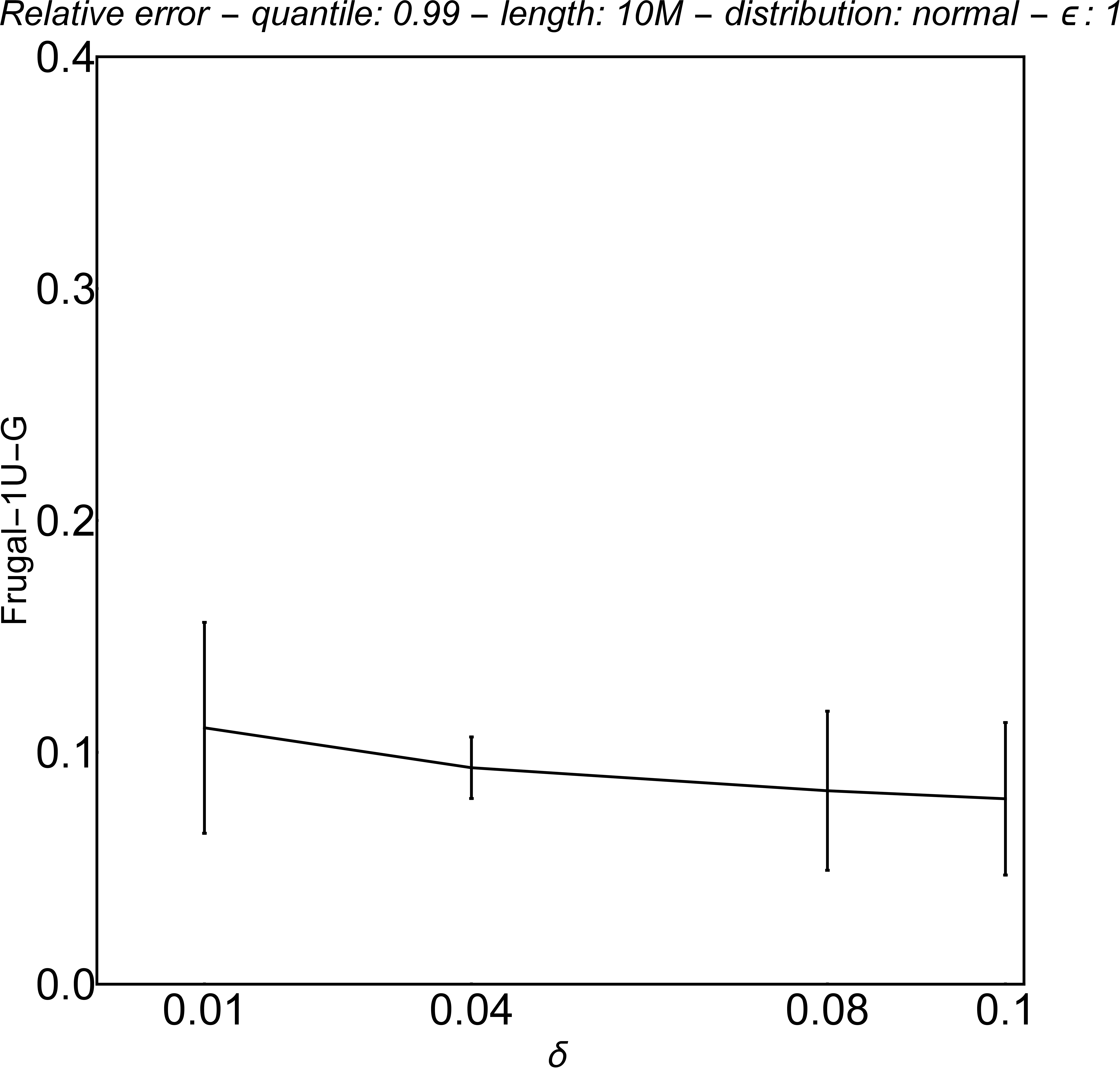}
        \caption{$\delta$}
        \label{fig:gauss-delta}
    \end{subfigure}\!\!
    \begin{subfigure}{0.33\textwidth}
        \includegraphics[width=0.99\linewidth]{./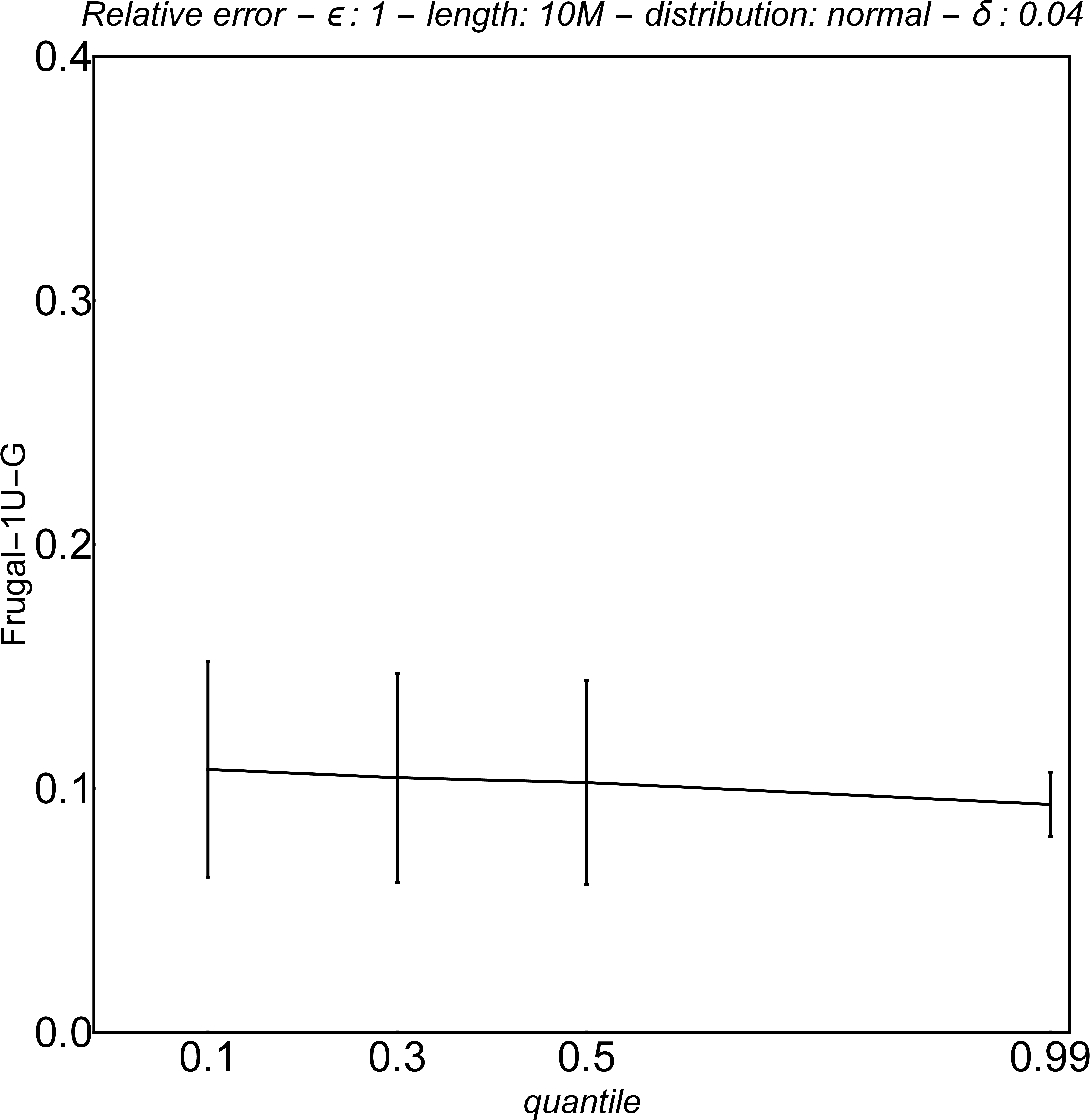}
        \caption{Quantile}
        \label{fig:gauss-q}
    \end{subfigure}\!\!
    \begin{subfigure}{0.33\textwidth}
        \includegraphics[width=0.99\linewidth]{./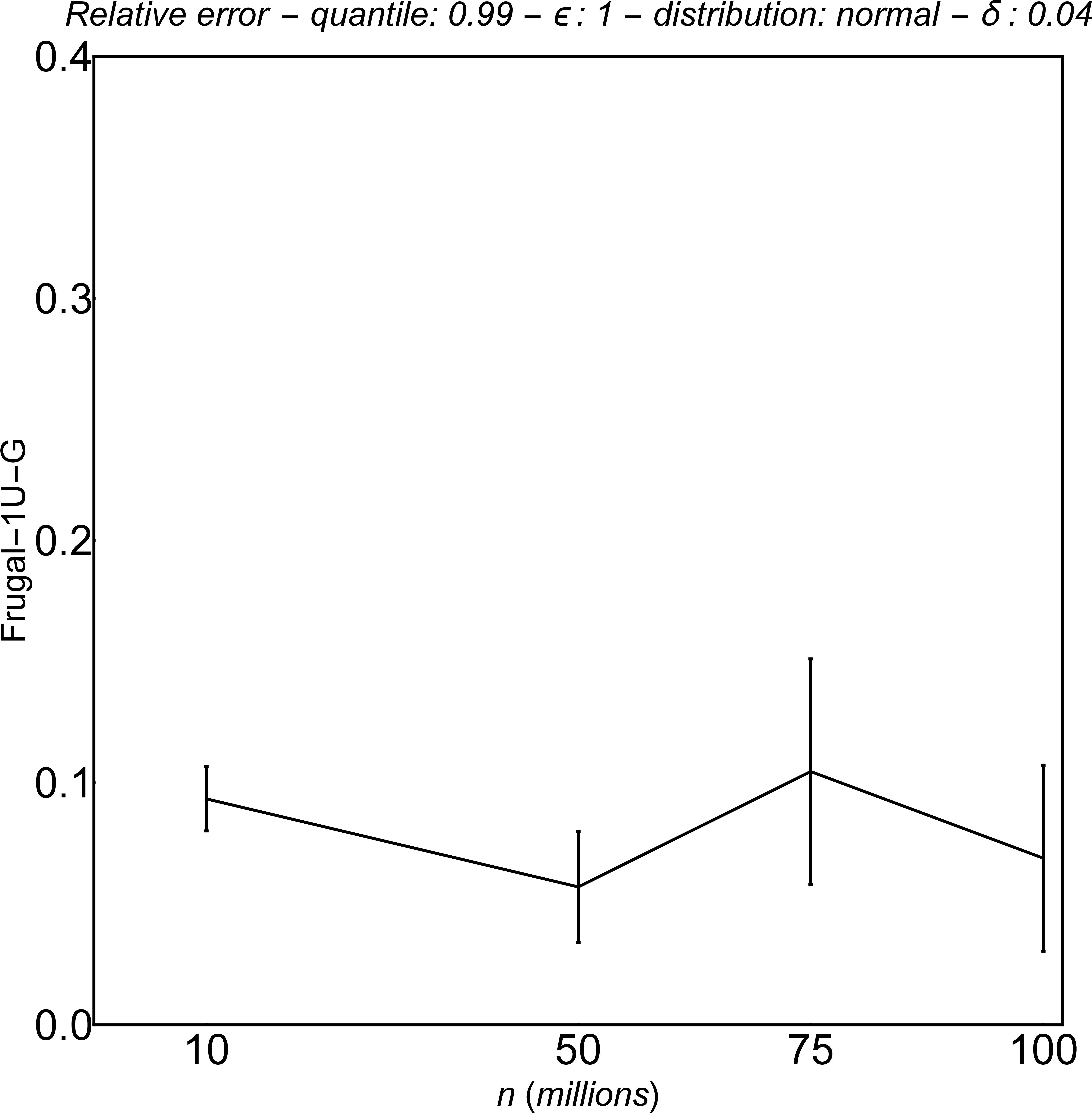}
        \caption{$n$}
        \label{fig:gauss-n}
    \end{subfigure}
 
   \caption{\textsc{Frugal-1U-G}. Relative error varying the probability $\delta$, the quantile $q$ and the stream size $n$.}
    \label{fig:gauss-1}
\end{figure*}

\begin{figure*}
    \centering
    \begin{subfigure}{0.33\textwidth}
        \includegraphics[width=0.99\linewidth]{./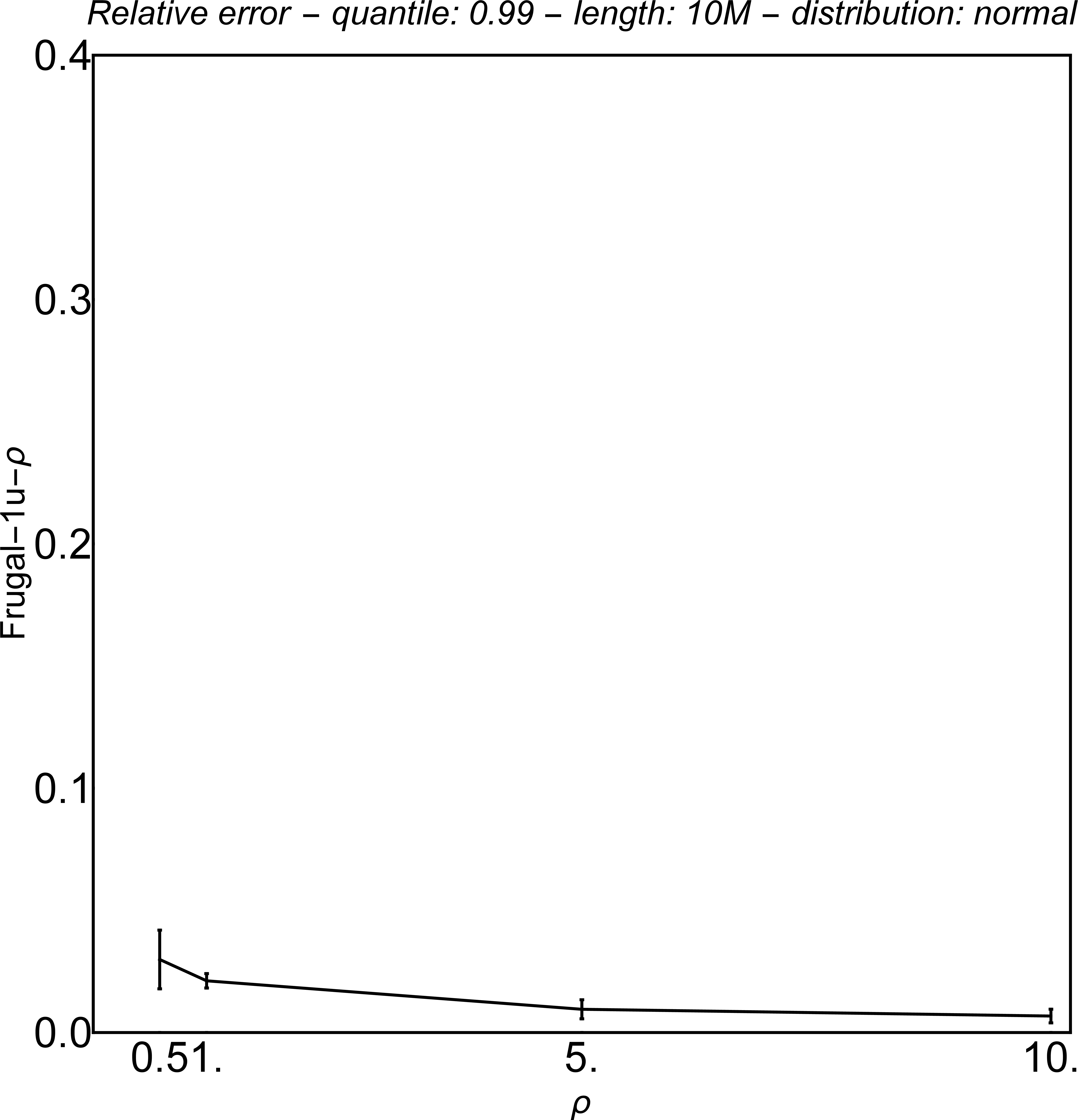}
        \caption{$\rho$}
        \label{fig:zcdp-rho}
    \end{subfigure}\!\!
    \begin{subfigure}{0.33\textwidth}
        \includegraphics[width=0.99\linewidth]{./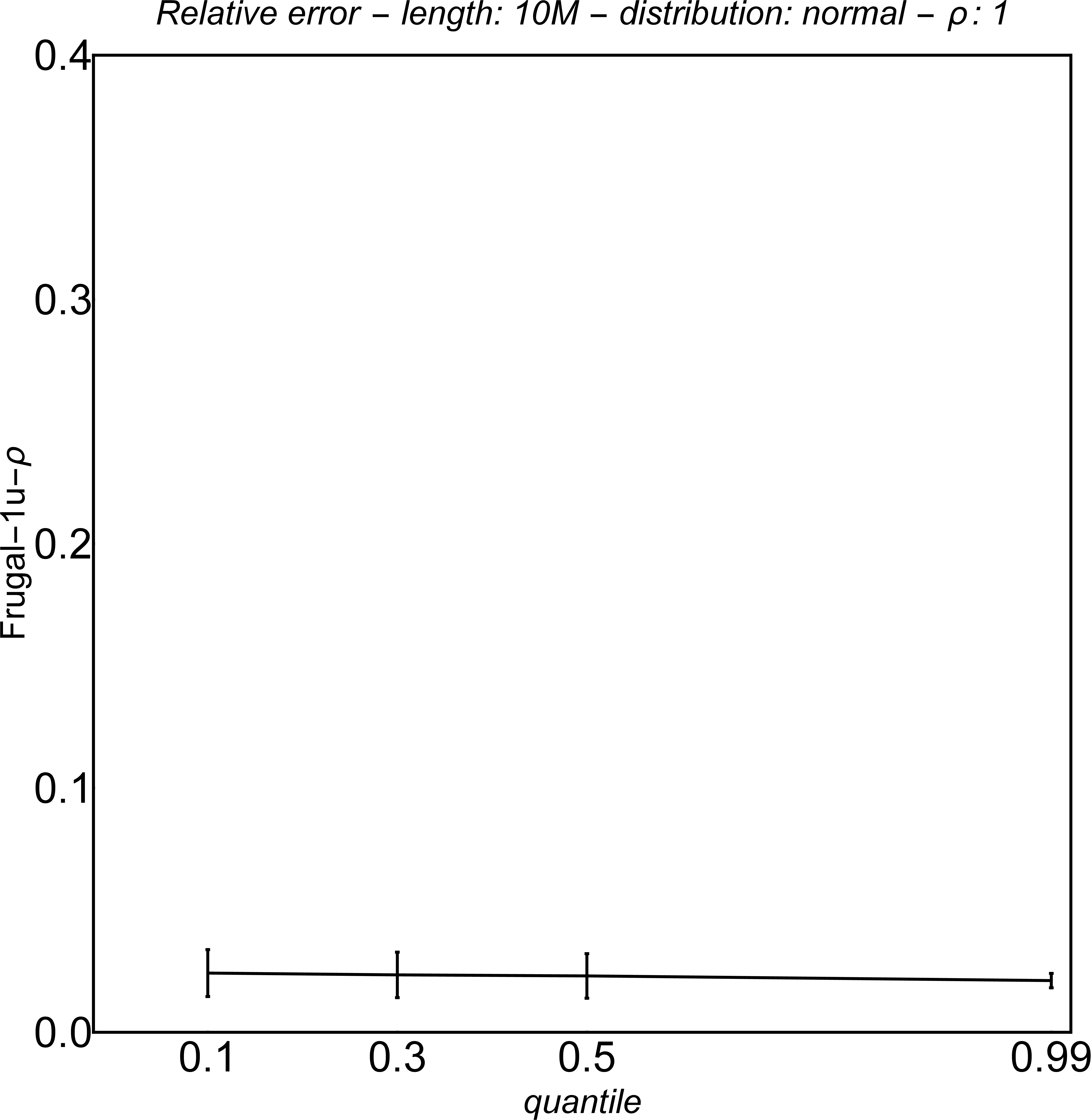}
        \caption{Quantile}
        \label{fig:zcdp-q}
    \end{subfigure}\!\!
    \begin{subfigure}{0.33\textwidth}
        \includegraphics[width=0.99\linewidth]{./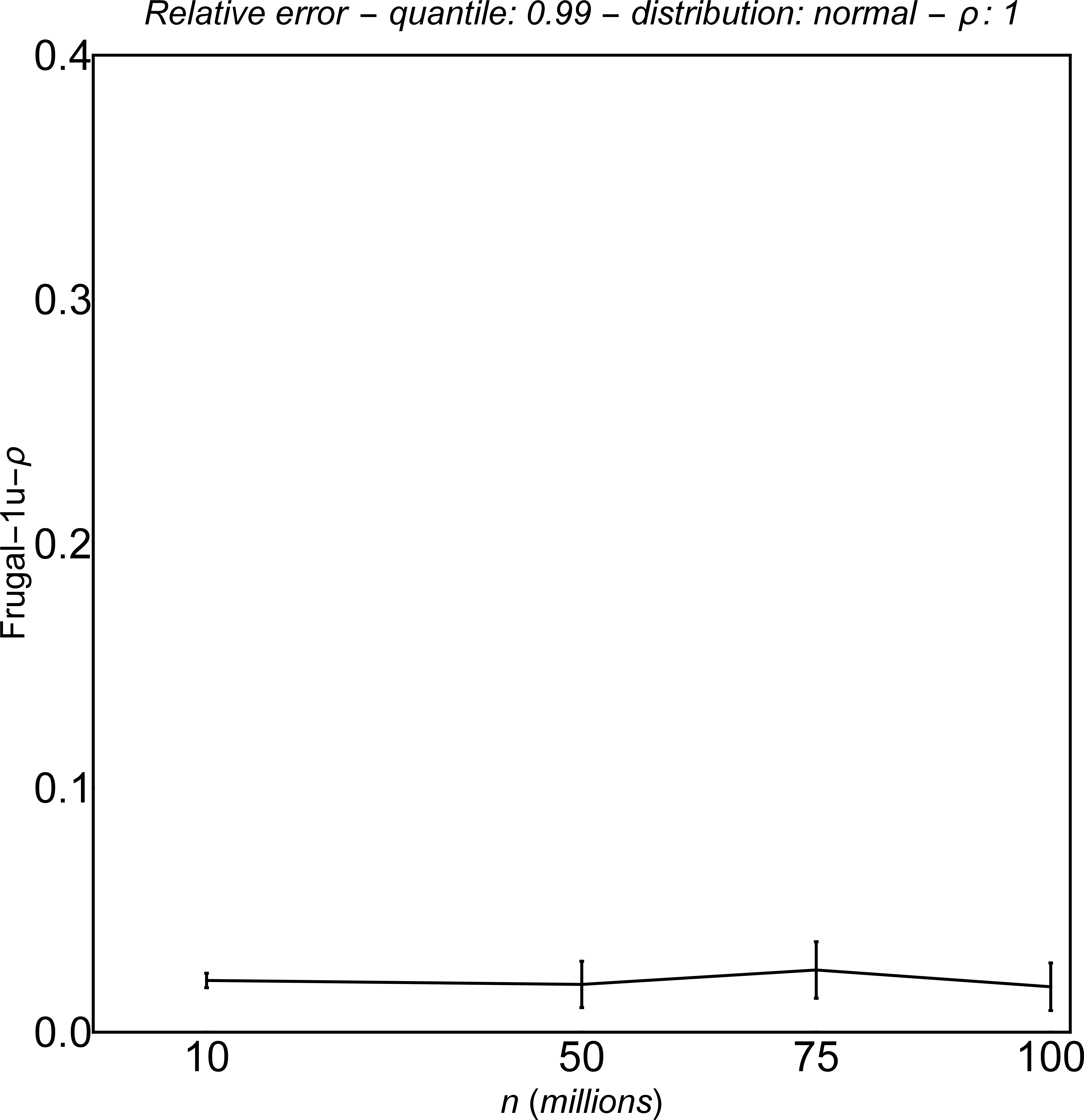}
        \caption{$n$}
        \label{fig:zcdp-n}
    \end{subfigure}
 
   \caption{\textsc{Frugal-1U-$\rho$}. Relative error varying the parameter $\rho$, the quantile $q$ and the stream size $n$.}
    \label{fig:zcdp-1}
\end{figure*}

\begin{figure*}
    \centering
        \includegraphics[width=0.85\linewidth]{./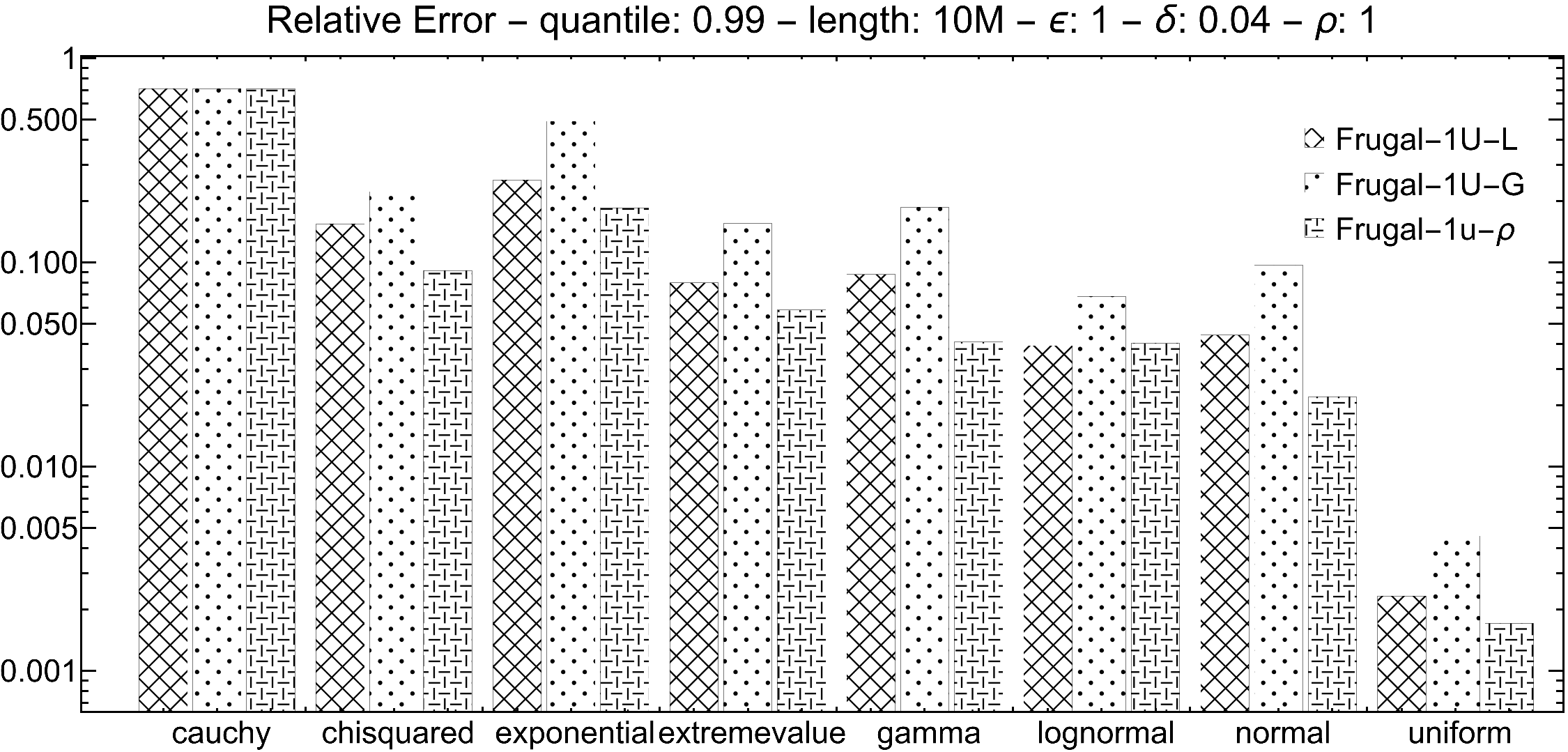}
        \caption{\textsc{Frugal-1U-L}: relative error varying the distributions.}
        \label{fig:distributions-all}
    
    \label{fig:zcdp-1}
\end{figure*}

We plot the relative error between the true quantile and the DP quantile estimate released by the algorithms under test, by allowing one parameter to vary whilst keeping the values of the others at their defaults. In all of the figures, the distribution used is the normal (later we compare the results obtained when varying the distribution as well). 

The experimental results for \textsc{Frugal-1U-L} (using the Laplace mechanism) are shown in Figure \ref{fig:1}. As depicted in Figure \ref{fig:budget-all}, the relative error decreases as expected when the privacy budget $\epsilon$ increases, meaning that the utility (see Section \ref{notation}) of the released value increases when $\epsilon$ increases. Therefore, a good tradeoff between privacy and utility is reached for $0.5 \leq \epsilon \leq 1$.

Figure \ref{fig:q-all} and Figure \ref{fig:n-all} depict the relative error varying respectively the computed quantile and the stream size, showing that the two quantities do not affect the security of the released quantile. Finally, Figure \ref{fig:throughput_all} depicts the throughput measured in updates/s.

Next, we analyze \textsc{Frugal-1U-G}. Increasing $\delta$, the probability of failure, provides as expected slightly less privacy, as shown in Figure \ref{fig:gauss-delta}. Varying the computed quantile exhibits a similar behaviour. In Figure \ref{fig:gauss-q}, slightly less privacy is associated to higher quantiles. Finally, the impact of the stream size is depicted in Figure \ref{fig:gauss-n}, in which a fluctuating behaviour can be observed, even though the interval of variation is tight.

Regarding \textsc{Frugal-1U-$\rho$}, Figure \ref{fig:zcdp-rho} shows that, as expected, increasing the privacy budget $\rho$ the relative error decreases and correspondingly the utility increases. A good tradeoff between privacy and utility is reached for $0.5 \leq \rho \leq 1$. Figure \ref{fig:zcdp-q} and \ref{fig:zcdp-n}, related respectively to the relative error varying the computed quantile and the stream size present the same behaviour illustrated for the Gaussian mechanism. This is not surprising, since this mechanism adds Gaussian noise (though the way noise is derived is of course different).

We now turn our attention to what happens when we vary the distribution. Figure \ref{fig:distributions-all} provides the results for \textsc{Frugal-1U-L}, \textsc{Frugal-1U-G} and \textsc{Frugal-1U-$\rho$}. As shown, the relative error between the true quantile and the DP quantile estimate released by one of the algorithms varies with the distribution. However, for our proposed algorithms, as expected (since the global sensitivity is just 2) the algorithms can be used independently of the actual distribution, with the notable exception related to the Cauchy distribution (which can be considered adversarial for our algorithms based on \textsc{Frugal-1U} as discussed in \cite{frugal}).

Our results show that, having fixed a distribution, the behaviour of our algorithms based on \textsc{Frugal-1U} does not depend on the seed used to initialize the pseudo-random number generator used to draw samples from the distribution. In this sense, our algorithms are robust. 

Finally, we analyze the $(\alpha, \beta)$-accuracy (Definition \ref{accuracy}) of \textsc{Frugal-1U-L}. Fixing $\beta = 0.04$, $\epsilon = 1$ and taking into account that the global sensitivity of \textsc{Frugal-1U} is $s = 2$, by using equation \eqref{alpha} we get $\alpha=\ln \left(\frac{1}{0.04}\right) \cdot 2 = 6.4$, so that \textsc{Frugal-1U-L} is (6.4, 0.04)-accurate.

For \textsc{Frugal-1U-G}, using Eq. \eqref{alpha-gauss} with $\delta = 0.04$, $\beta = 0.04$ and $\epsilon = 1$ we get $\alpha = 9.1$ so that \textsc{Frugal-1U-G} is (9.1, 0.04)-accurate. Finally, \textsc{Frugal-1U-$\rho$} accuracy is determined by using equation \eqref{alpha-rho} with $\rho = 1$ and $\beta = 0.04$, so that $\alpha = 2.4$ and  \textsc{Frugal-1U-$\rho$} is (2.4, 0.04)-accurate.

\section{Conclusions}
\label{conclusions}
In this paper, we proposed DP algorithms for tracking quantiles in a streaming setting. Our algorithms are DP variants of the well-known \textsc{Frugal-1U}algorithm, characterized by the property of requiring just a tiny amount of memory to process a stream while guaranteeing surprising accuracy for the estimates of a quantile. In particular, for \textsc{Frugal-1U} we gave corresponding $\epsilon$-DP, $(\epsilon,\delta)$-DP, and $\rho$-zCDF algorithms after proving that the global sensitivity of \textsc{Frugal-1U} is equal to 2. Finally, we also showed that the proposed algorithms achieve good accuracy in the experimental results.

\section*{Declaration on Generative AI}
  The author(s) have not employed any Generative AI tools.
  \newline

\bibliography{biblio}

\end{document}